\title{\LARGE \bf When showing your hand pays off: \\Announcing strategic intentions in Colonel Blotto games \\
\thanks{This work is supported by ONR grant \#N00014-17-1-2060, NSF grant \#ECCS-1638214, and UCOP Grant LFR-18-548175.}}
\author{Rahul Chandan \and Keith Paarporn \and Jason R. Marden
\thanks{R. Chandan, K. Paarporn and J. R. Marden are with the Department of Electrical and Computer Engineering and the Center of Control, Dynamical Systems and Computation, UC Santa Barbara, USA. Email: \href{mail_to:rchandan@ucsb.edu}{rchandan@ucsb.edu}, \href{mail_to:kpaarporn@ucsb.edu}{kpaarporn@ucsb.edu}, \href{mail_to:jrmarden@ece.ucsb.edu}{jrmarden@ece.ucsb.edu}.}}
\date{September 2019}
\begin{document}

\maketitle

\begin{abstract}
     In competitive adversarial environments, it is often advantageous to obfuscate one's strategies or capabilities. However, revealing one's strategic intentions may shift the dynamics of the competition in complex ways. Can it ever be advantageous to reveal strategic intentions to an opponent? In this paper, we consider three-stage Colonel Blotto games in which one  player can choose whether or not to pre-commit resources to a single battlefield before play begins. This pre-commitment is public knowledge. In response, the opponent can either secure the battlefield by matching the pre-commitment with its own forces, or withdraw. In a two-player setting, we show that a weaker player never has an incentive to pre-commit any amount of resources to a battlefield regardless of how valuable it is. We then consider a three-player setting in which two players fight against a common adversary on separate fronts. Only one of the two players facing the adversary has the option of pre-committing. We find there are instances where this player benefits from pre-committing. The analysis indicates that under non-cooperative team settings and no possibility of forming alliances, there can be incentives to publicly announce one's strategic intentions to an adversary. 
\end{abstract}

\section{Introduction}
The Colonel Blotto game models competitive resource allocation problems in which two opposing players strategically distribute their respective budgets across a set of battlefields. The aim of each player is to accrue as much value as possible by securing battlefields. The game has been studied in the context of political campaigns \cite{Behnezhad_2018,Thomas_2018}, security of cyber-physical systems \cite{Ferdowsi_2017,Gupta_2014a}, and competitive markets \cite{Golman_2009}.

First introduced by Borel in 1921 \cite{Borel}, mixed strategy Nash equilibria have been developed and characterized for Blotto games over the last one hundred years. These equilibria describe players' optimal security strategies as well as their performance guarantees. The work of Roberson \cite{Roberson_2006} established a general solution for an arbitrary number of homogeneous battlefields and asymmetric resource budgets. Solutions to the General Lotto game, a variant of the Colonel Blotto game, have been established in more general settings as it allows for a higher degree of analytical tractability \cite{Hart_2008,Kovenock_2015}. Irrespective of the setup, the non-existence of pure Nash equilibria in many cases of interest suggests competitors that employ predictable strategies are exploitable \cite{Golman_2009,Schwartz_2014}.

However, revealing one's intentions may shift the dynamics of the competition. For example, consider two  firms A and B competing for control over multiple markets. Firm B expresses interest in a new, niche market, and announces its investment in this market. This presents a choice to firm A; A can choose either to 1) give up on this market and focus its resources on others, possibly in competition with firms other than B, or 2) divert some of its resources to stay competitive against B in the new market. In effect, firm B sets the price to compete in the new market. 

In the context of Colonel Blotto games, the focus of this paper centers on whether a player can improve upon its standard equilibrium performance by announcing a pre-commitment of resources to a single battlefield. We first consider a three-stage, two-player scenario. In the first stage, one player decides whether or not to pre-commit resources to a battlefield of its choice. If it does not commit, the players play the classic one-shot game over the entire set of battlefields in stage two. If it decides to commit, the player decides how many resources to allocate to that battlefield. Then, in stage two, the adversary decides to either match the pre-commitment on that battlefield, or to send no resources. In stage three, they play the one-shot game over the remaining battlefields with the re-distribution of forces.

We then consider a three-stage, three-player game in which two players compete against a common adversary, each on separate fronts. Only one of the players on the team (say, player 1) has the option of pre-committing to a battlefield in the first stage. In this scenario, the adversary must now decide how to split its forces among two fronts, as well as whether or not to call the pre-commitment of player 1. In stage three, two one-shot games are played between the adversary and each of the two team members. 

Previous work has considered similar three-stage, three-player setups, in which the team players have the option of transferring resources among each other, or of adding battlefields to their respective one-shot games \cite{Kovenock_2010,Kovenock_2012,Gupta_2014a,Gupta_2014b}. The problem we pose here differs in regard to the non-possibility of forming alliances even when there is a common enemy. This reflects non-cooperative settings characterized by a lack of communication or consent to collaboration \cite{Radner_1962,Tsitsiklis_1985,Liu_2004,Grimsman_2018}. For example, consider two different companies that must protect their data servers against a hacker intent on accessing as much data as possible (from one or both of the companies' servers). The security of one of the company's servers is not the responsibility or even concern of the other. If one of the companies had an opportunity to announce a pre-commitment of security resources on a portion of its servers, how would this change the hacker's own allocation of resources to access data? We seek to address whether such a pre-commitment relieves attack pressure the company would have faced otherwise.

\subsection*{Our Contributions:}
Our main findings are as follows. In a one-vs-one setting, the weaker resource player will never have an incentive to pre-commit resources. That is, it will always prefer to play the standard one-shot game contested over the entire set of battlefields. In the three-player setup, we find there are instances in which the weaker player can benefit from pre-committing. 

These findings provide insight into the circumstances under which a single competitor can profit from a public announcement of strategies. A two-player setting is not complex enough for a pre-commitment to sufficiently draw an adversary's attention away from valuable prizes. However in a three-player setting, the presence of a team member can serve as a distraction to the adversary.

\section{Preliminaries: Static Colonel Blotto and General Lotto Games}

Consider a game wherein two players, $A$ (the adversary) and $B$, simultaneously allocate their respective budgets, $X_A$ and $X_B$, over a set of $n \geq 2$ battlefields, $\B = \{1, \dots, n\}$. 
Each battlefield $b \in \B$ has an associated value $v_b > 0$, which is won by the player assigning a higher level of force to $b$, at the cost of the opposing player, i.e., given budget allocations $\mathbf{x}_i = (x_{i,1}, \dots, x_{i,n}) \in \mathbb{R}^n$ for both $i \in \{A,B\}$, the payoff to player $i$ is
\[ \sum^n_{b=1} v_b \, \mathrm{sign}(x_{i,b} - x_{-i,b}), \]
where $-i$ is the player opposing $i$. 
For ease of notation, we will assume that player $A$ wins if there is a tie.

The \textit{Colonel Blotto game} is a game as defined above where, for each player $i \in \{A,B\}$, a valid budget allocation $\mathbf{x}_i$ satisfies $x_{i,b} \geq 0$ for all $b \in \B$, and $\sum^n_{b=1} x_{i,b} \leq X_i$. A valid mixed strategy in a Colonel Blotto game is an $n$-variate probability distribution function, $S_i: \mathbb{R}^n \to [0,1]$, with support only in the set of valid budget allocations.

The \textit{General Lotto game} represents a variation of the Colonel Blotto game where a valid mixed strategy $S_i$ need only satisfy the budget constraint in expectation, i.e., for both $i \in \{A,B\}$,
\[  \sum^n_{b=1} \mathbb{E}_{S_{i,b}} [ x_{i,b} ] \leq X_i, \]
where $S_{i,b}:\mathbb{R} \to [0,1]$ is the univariate marginal distribution of $S_i$ corresponding to battlefield $b \in \B$.
Note that the budget constraint is relaxed in the General Lotto game. Indeed, a budget allocation $\mathbf{x}_i$ such that $\sum^n_{b=1} x_b > X_i$ can still be attributed positive probability in the mixed strategy $S_i$ of a player $i \in \{A,B\}$.
Given player budgets $X_A, X_B > 0$, and a set of $n$ battlefields $\B$ with $\mathbf{v} = (v_1, \dots, v_n)$, we denote the corresponding General Lotto game as $\gl(X_A,X_B,\B)$. 

As both Colonel Blotto and General Lotto are zero-sum games, the players' equilibrium payoffs are unique. We reproduce the equilibrium payoffs of the General Lotto game here.
\begin{prop}[General Lotto payoffs \cite{Kovenock_2012}]
    For player budgets $X_A, X_B > 0$, and $\phi = \sum_{b=1}^n v_b$, the equilibrium payoff to a player $i \in \{A,B\}$ in the game $\gl(X_A,X_B,\B)$ is,
    \begin{equation} \label{eq:twoplayer_payoff}
        \begin{cases} 
            \phi \left( \frac{X_i}{X_{-i}} - 1 \right)& \quad \text{if } 0 < X_i \leq X_{-i} \\
            \phi \left( 1 - \frac{X_{-i}}{X_i} \right)& \quad \text{if } X_i > X_{-i}.
        \end{cases} 
    \end{equation}
\end{prop}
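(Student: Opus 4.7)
The plan is to construct a saddle point of the zero-sum game $\gl(X_A,X_B,\B)$ explicitly and read off its value; uniqueness of the equilibrium payoffs then follows from the standard minimax argument for zero-sum games. Since the game is zero-sum it suffices to verify one player's payoff in each regime, and by relabeling the players we may assume $X_B \leq X_A$. The sub-case $X_B = X_A$ gives payoff zero to both players by exchangeability, which is consistent with both branches of the formula at the boundary.

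Assume $X_B < X_A$. Because the General Lotto budget constraint is only in expectation, the payoff of any mixed strategy depends only on its univariate marginals $S_{i,b}$, so it suffices to specify marginals on each battlefield. Set $\mu_{i,b} := v_b X_i/\phi$, which satisfies $\sum_b \mu_{i,b} = X_i$. I would propose that player $A$ plays $x_{A,b}$ uniformly on $[0, 2\mu_{A,b}]$, and that player $B$ plays $x_{B,b}$ uniformly on $(0, 2\mu_{A,b}]$ with probability $X_B/X_A$ and $x_{B,b}=0$ otherwise. Both marginals satisfy the budget constraint in expectation, and any joint coupling thereof is a valid mixed strategy.

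To verify the equilibrium property, I would use the following observation: against $A$'s uniform marginal on $[0,2\mu_{A,b}]$, the probability that $B$ wins battlefield $b$ by spending $y \in [0, 2\mu_{A,b}]$ equals $y/(2\mu_{A,b})$, so the marginal value per unit of spending by $B$ is $v_b/(2\mu_{A,b}) = \phi/(2X_A)$, which is independent of $b$. Hence $B$ is indifferent among all allocations with expected total $X_B$ lying within the uniform supports, and the proposed strategy is a best response. Computing $B$'s expected payoff directly from these win probabilities then yields $\phi(X_B/X_A - 1)$, as claimed.

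The main obstacle is the symmetric check that $A$ is best-responding against $B$'s strategy. This is more delicate because $B$'s marginal carries an atom of mass $1 - X_B/X_A$ at zero in addition to its uniform part on $(0, 2\mu_{A,b}]$, so one must verify via $B$'s marginal CDF that $A$'s per-unit marginal value again equalizes across battlefields at the proposed allocation. Once this is established, the payoff $\phi(1 - X_B/X_A)$ for $A$ follows, and relabeling $i$ and $-i$ yields both branches of the stated piecewise expression.
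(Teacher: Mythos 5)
The paper offers no proof of this proposition: it is reproduced verbatim from the cited reference (Kovenock and Roberson), so there is no in-paper argument to compare against. Your construction is essentially the standard one from that literature and it does work: the stronger player $A$ plays the uniform marginal on $[0,2v_bX_A/\phi]$ on each battlefield, the weaker player $B$ mixes an atom of mass $1-X_B/X_A$ at zero with the same uniform, and both sides' expected budgets check out. Two remarks. First, a small slip: the payoff on battlefield $b$ is $v_b\,\mathrm{sign}(\cdot)=v_b(2p-1)$ where $p$ is the win probability, so $B$'s per-unit marginal payoff against $A$'s uniform is $2v_b/(2\mu_{A,b})=\phi/X_A$, not $\phi/(2X_A)$; this is harmless since all that matters is that it is constant across battlefields (and that spending beyond the support $2\mu_{A,b}$ is wasted, which caps the linear bound and gives $B$'s optimum $\phi(X_B/X_A-1)$). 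Second, the step you defer is not actually delicate: $B$'s marginal CDF on battlefield $b$ is $F_{B,b}(x)=(1-X_B/X_A)+(X_B/X_A)\,x/(2\mu_{A,b})$ on the support, so $A$'s payoff $v_b(2F_{B,b}(x)-1)=v_b(1-2X_B/X_A)+x\,\phi X_B/X_A^2$ is affine in $x$ with a battlefield-independent slope; summing and applying the expected-budget constraint gives the upper bound $\phi(1-2X_B/X_A)+X_A\cdot\phi X_B/X_A^2=\phi(1-X_B/X_A)$, attained by the proposed strategy (the tie-breaking convention is immaterial since $A$'s distribution is atomless). With that verification written out, your argument is a complete and self-contained proof of the cited formula, which is more than the paper itself provides.
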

The players' equilibrium payoffs in the General Lotto game  arbitrarily approximates those of the Colonel Blotto game when the number of battlefields becomes  large (see \cite{Kovenock_2012} for a discussion), and are relatively simple to compute. For this reason, we consider the static General Lotto game as the primitive model of competitive resource allocation in the forthcoming multi-stage models.

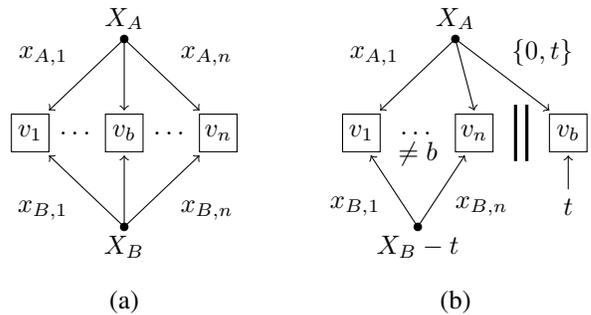
\begin{figure}
    \centering
    \begin{minipage}[b]{0.2\textwidth}
    \begin{tikzpicture}
        \draw (0.5,0) rectangle (1,0.5);
        \draw (1.375,0.25) node {$\dots$};
        \draw (1.75,0) rectangle (2.25,0.5);
        \draw (2.625,0.25) node {$\dots$};
        \draw (3,0) rectangle (3.5,0.5);
        \node at (0.75,0.25) {$v_1$};
        \node at (2,0.25) {$v_b$};
        \node at (3.25,0.25) {$v_n$};
        \draw[fill] (2,-1) circle [radius=0.05];
        \draw[fill] (2,1.5) circle [radius=0.05];
        \node[above] at (2,1.5) {$X_A$};
        \node[below] at (2,-1) {$X_B$};
        \draw[->] (2,-1) -- (1,-0.05);
        \node[below left] at (1.375,-0.525) {$x_{B,1}$};
        \draw[->] (2,-1) -- (2,-0.05);
        \draw[->] (2,-1) -- (3,-0.05);
        \node[below right] at (2.625,-0.525) {$x_{B,n}$};
        \draw[->] (2,1.5) --  (1,0.55);
        \node[above left] at (1.375,1.025) {$x_{A,1}$};
        \draw[->] (2,1.5) --  (2,0.55);
        \draw[->] (2,1.5) --  (3,0.55);
        \node[above right] at (2.625,1.025) {$x_{A,n}$};
        \node at (2,-2) {(a)};
    \end{tikzpicture}
    \end{minipage}
    \hspace{10pt}
    \begin{minipage}[b]{0.2\textwidth}
    \begin{tikzpicture}
        \draw (0.5,0) rectangle (1,0.5);
        \draw (1.5,0.25) node {$\dots$};
        \node[below] at (1.5,0.25) {$\neq b$};
        \draw (2,0) rectangle (2.5,0.5);
        \draw[very thick] (2.8,-0.125) -- (2.8,0.625);
        \draw[very thick] (2.95,-0.125) -- (2.95,0.625);
        \draw (3.25,0) rectangle (3.75,0.5);
        \node at (0.75,0.25) {$v_1$};
        \node at (2.25,0.25) {$v_n$};
        \node at (3.5,0.25) {$v_b$};
        \draw[fill] (1.5,-1) circle [radius=0.05];
        \draw[fill] (2,1.5) circle [radius=0.05];
        \node[above] at (2,1.5) {$X_A$};
        \node[below] at (1.5,-1) {$X_B-t$};
        \node[below] at (3.5,-0.5) {$t$};
        \draw[->] (1.5,-1) -- (0.875,-0.05);
        \node[below left] at (1.125,-0.475) {$x_{B,1}$};
        \draw[->] (1.5,-1) -- (2.125,-0.05);
        \node[below right] at (1.875,-0.475) {$x_{B,n}$};
        \draw[->] (3.5,-0.5) -- (3.5,-0.05);
        \draw[->] (2,1.5) --  (1,0.55);
        \node[above left] at (1.375,1.025) {$x_{A,1}$};
        \draw[->] (2,1.5) --  (2.25,0.55);
        \draw[->] (2,1.5) --  (3.25,0.55);
        \node[above right] at (2.625,1.025) {$\{0,t\}$};
        \node at (2,-2) {(b)};
    \end{tikzpicture}
    \end{minipage}
    \caption{The two possible outcomes of the General Lotto game with pre-commitments. 
    In a), player $B$ does not pre-commit to a transfer of its forces to a battlefield $b \in \B$. The two players play a one-shot General Lotto game over all the battlefields in $\B$. 
    In b), player $B$ pre-commits a force $t \in [0,X_B]$ to a battlefield $b \in \B$. Player $A$ subsequently has the choice of sending an equivalent force to battlefield $b$ (calling), or abstaining from allocating any of its budget to $b$ (folding). The two players play a one-shot General Lotto game over the rest of the battlefields, $\B \setminus \{b\}$.}
    \label{fig:two-player-games}
\end{figure}

\section{Pre-commitments in a Two-Player Setting}

In this section, we formulate a three-stage, two-player General Lotto game.  
In the first stage, player $B$ has the option to pre-commit resources to a single battlefield of its choice. 
Here, we are assuming player $B$ has fewer resources than player $A$, i.e., $X_B < X_A$. 
If $B$ decides not to commit any resources, the players play their equilibrium strategies in the static General Lotto game $\gl(X_A,X_B,\B)$ in the second stage. 
In the final stage, players are assigned the final payoffs according to \eqref{eq:twoplayer_payoff}. 
This outcome is illustrated in Figure \ref{fig:two-player-games}(a). 

However, if $B$ does decide to pre-commit, it chooses one of the battlefields $b \in \B$ as well as an amount $t\in[0,X_B]$ of resources to commit to $b$ in the first stage. 
In the second stage, player $A$  observes player $B$'s choices $b$ and $t$, and decides whether to dedicate $t$ of its own budget $X_A$ in order to win the battlefield $b$ ($A$ \textit{calls}), or to save its budget for the remaining battlefields and lose the battlefield $b$ ($A$ \textit{folds}).
If $A$ calls, it wins battlefield $b$, and the static Lotto game $\gl(X_A-t,X_B-t,\B\setminus b)$ is played among the $n-1$ remaining battlefields in stage three. Player $A$'s final payoff is then
\begin{equation}\label{eq:call_payoff}
    u_A^{\text{call}}(X_A,X_B,t) = (\phi-v_b)\left(1-\frac{X_B-t}{X_A-t} \right) + v_b.
\end{equation}
If $A$ folds, battlefield $b$ is won by player $B$, and the static Lotto game $\gl(X_A-t,X_B,\B\setminus b)$ is played among the other $n-1$ battlefields. Player $A$'s final payoff in this case is
\begin{equation}\label{eq:fold_payoff}
    u_A^{\text{fold}}(X_A,X_B,t) = (\phi-v_b)\left(1-\frac{X_B-t}{X_A} \right) - v_b.
\end{equation}
In effect, the competition for $b$ becomes a separate, single battlefield Colonel Blotto game on which $B$ has committed to sending exactly $t$ resources. The setup here is illustrated in Figure \ref{fig:two-player-games}(b). We assume player $A$ is rational and chooses the action in stage 2 that maximizes its final payoff. Formally, we may then write the final (optimal) payoff to player $A$ as 
\begin{equation}
    \pi_A^t(X_A,X_B) := \max\left\{ u_A^{\text{call}}(X_A,X_B,t),u_A^{\text{fold}}(X_A,X_B,t) \right\}.
\end{equation}
As the game remains zero-sum, the player $B$ receives a final payoff of $\pi_B^t(X_A,X_B) := -\pi_A^t(X_A,X_B)$. 
We say that there exists a profitable pre-commitment for $B$ if there exist $b\in\B$ and $t\in[0,X_B]$ such that
\begin{equation}\label{eq:improvement}
    \Delta\pi_B^t := \pi_B^t(X_A,X_B) - \pi_B(X_A,X_B) > 0,
\end{equation}
where $\pi_B(X_A,X_B)$ is given by the first entry of \eqref{eq:twoplayer_payoff}. 
Intuitively, if \eqref{eq:improvement} holds for some $b\in\B$ and $t\in[0,X_B]$, player $B$ can improve upon its payoff from the standard General Lotto game setup by making an appropriate pre-commitment. 
The following theorem states there is \emph{never} an incentive for $B$ to make a pre-commitment. In other words, condition \eqref{eq:improvement} never holds for any $X_B < X_A$, $t \in [0,X_B]$, and $b \in \B$.

\begin{thm}\label{theorem:twoplayer}
    Consider the two-player sequential General Lotto game where $X_B < X_A$. Then there is no pre-commitment $t \in [0,X_B]$ for player $B$ to any battlefield $b \in \B$ such that $\Delta\pi_B^t \geq 0$.
\end{thm}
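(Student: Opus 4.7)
Since the game is zero-sum, $\Delta\pi_B^t \geq 0$ is equivalent to $\pi_A^t(X_A,X_B) \leq \pi_A(X_A,X_B)$, where $\pi_A(X_A,X_B) = \phi(1 - X_B/X_A)$ by \eqref{eq:twoplayer_payoff}. The plan is to rule this out by establishing the stronger fact that $A$'s \emph{call} response, by itself, already strictly beats the standard-game payoff. Because $\pi_A^t = \max\{u_A^{\text{call}}, u_A^{\text{fold}}\}$, it suffices to show
\[
    u_A^{\text{call}}(X_A,X_B,t) > \pi_A(X_A,X_B) \qquad \forall\, b\in\B,\ t\in[0,X_B].
\]

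The main step is a one-variable monotonicity argument. First, I would rewrite \eqref{eq:call_payoff} in the compact form
\[
    u_A^{\text{call}}(X_A,X_B,t) = \frac{(\phi - v_b)(X_A - X_B)}{X_A - t} + v_b,
\]
which makes it immediate that $u_A^{\text{call}}$ is strictly increasing in $t$ on $[0,X_A)$, using $\phi > v_b$ (guaranteed by $n \geq 2$ and $v_j > 0$ for all $j$) together with $X_A > X_B$. The intuition is that a call converts $B$'s pre-commitment into a tax levied equally on both budgets, which can only widen $A$'s relative advantage in the residual game $\gl(X_A-t, X_B-t, \B\setminus b)$; on top of that, $A$ banks the prize $v_b$ outright. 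Next, I would anchor the base case $t = 0$, where a short calculation yields
\[
    u_A^{\text{call}}(X_A,X_B,0) - \pi_A(X_A,X_B) = v_b\cdot\frac{X_B}{X_A} > 0,
\]
reflecting that a zero pre-commitment is matched by $A$ and lost by $B$ for free under the tie-breaking rule. Combining monotonicity with this strict base inequality extends $u_A^{\text{call}}(X_A,X_B,t) > \pi_A(X_A,X_B)$ to every $t \in [0,X_B]$, whence $\pi_A^t > \pi_A$ and $\Delta\pi_B^t < 0$.

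I do not anticipate a real obstacle: the statement collapses to one monotonicity observation plus one base-case computation. The only point that merits care is that the monotonicity step uses $\phi > v_b$ \emph{strictly}, which relies on the standing assumption of at least two positively-valued battlefields; in the degenerate case $n = 1$, $u_A^{\text{call}}$ becomes constant in $t$ and the argument would have to be replaced by a direct single-battlefield check.
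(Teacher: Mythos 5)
Your proof is correct, but it takes a genuinely shorter route than the paper's. Both arguments begin with the same key inequality: the paper's ``call'' case computes $\Delta\pi_B^t = \tilde{\phi}\tfrac{X_B-t}{X_A-t} - \phi\tfrac{X_B}{X_A} < 0$, which is algebraically the same statement as your $u_A^{\text{call}}(X_A,X_B,t) > \pi_A(X_A,X_B)$. The divergence is in what happens next. The paper treats this as covering only the parameter region where calling is $A$'s best response, and then spends the bulk of the appendix characterizing the fold region --- analyzing the quadratic $C(t)$, the discriminant $Q(v_b)$, its roots $v_\pm(\gamma)$, and the monotonicity of $\Delta\pi_B^{t_-}$ in $v_b$. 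You observe that none of this is needed: since $\pi_A^t = \max\{u_A^{\text{call}}, u_A^{\text{fold}}\} \geq u_A^{\text{call}}$ by definition, a uniform lower bound on the call payoff already bounds $A$'s optimal payoff from below regardless of which action $A$ actually takes (when $A$ folds, it does so only because folding is even better for it, hence even worse for $B$). Combined with monotonicity of $u_A^{\text{call}}$ in $t$ and the base case $u_A^{\text{call}}(X_A,X_B,0) - \pi_A = v_b X_B/X_A > 0$, this closes the argument in a few lines. Your approach buys a much cleaner proof of the same strict inequality; what it gives up is the side information the paper's longer analysis produces, namely the explicit characterization (Figure~\ref{fig:thm1_proof}) of exactly when $A$ calls versus folds, which the authors use for the qualitative discussion following the proof. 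The one caveat you flag --- that $\phi > v_b$ requires $n \geq 2$ --- is consistent with the paper's standing assumption, so there is no gap.
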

The proof is provided in Appendix A.

\begin{rem}
    It is worth noting here that the above result cannot be shown by applying a min-max argument.
    By making a pre-commitment of $t$ forces to battlefield $b$, player $B$ does not only impose a static allocation in its own feasible set of actions, it also reduces the set of admissible actions available to player $A$.
    Observe that $A$ continues to play the standard General Lotto game against $B$ on battlefields $\B \setminus b$, but that its actions on $b$ are restricted to either call or fold.
    If $A$ were still capable of playing its original mixed strategy across all battlefields in $\B$ after $B$'s pre-commitment, \textit{then} a min-max argument would be applicable.
\end{rem}

\section{The Three-Player Setting}
In the previous section, we showed that no pre-commitment can increase the payoff of a weaker player in the two-player setting.
Since the players are solely focused on competing with each other, the weaker player has no recourse that will increase its payoff, as it is already at a disadvantage.
In this section, we consider whether a weaker player can increase its payoff via pre-commitment in a setting where the adversary is competing with an additional player.

\subsection{Coalitional General Lotto games with pre-commitment}
We consider a modified version of the three-player coalitional General Lotto game studied in, e.g., \cite{Kovenock_2010,Kovenock_2012,Gupta_2014a,Gupta_2014b}.
In the game, a player $A$ competes against two players, $1$ and $2$, on two disjoint sets of battlefields, $\B_1$ and $\B_2$, with battlefield values $\mathbf{v}_j = (v_{j,1}, \dots, v_{j,|\B_j|})$ such that $\phi_j = \sum_{b \in \B_j} v_{j,b}$, for $j = 1, 2$.
Players 1, 2 and $A$ have positive budgets $X_1$, $X_2$, and $X_A$, respectively. 
Without loss of generality, we normalize the players' budgets such that $X_A = 1$.
The game has the following sequential structure:
\begin{enumerate}[leftmargin=*]
    \item Player 1 decides whether to pre-commit a force $t \in [0,X_1]$ that it will not use, to pre-commit a force $t \in [0,X_1]$ to a battlefield $b \in \B_1$, or to not make any pre-commitment. We denote by $X^t_1$, player 1's budget after making its decision, i.e., $X^t_1 = X_1 - t$ if player 1 makes a pre-commitment $t \in [0,X_1]$, and $X^t_1 = X_1$ if it does not.
    \item If player 1 chose not to pre-commit a force $t$ to any battlefield $b \in \B_1$, player $A$ splits its budget between the two sets of battlefields; a force $X_{A1} \geq 0$ is assigned to $\B_1$, and a force $X_{A2} \geq 0$ is assigned to $\B_2$ such that $X_{A1} + X_{A2} = 1$, as in \cite{Kovenock_2012}. 
    
    Else, if player 1 pre-committed a force $t$ to a battlefield $b \in \B_1$, then player $A$ splits its budget such that $X^t_{A1} + X^t_{A2} = 1$.
    If $X^t_{A1} > t$ and $A$ calls, the forces $X^t_{A1}-t, X_{A2}$ are assigned to sets $\B_1 \setminus \{b\}$ and $\B_2$, respectively.
    Otherwise, $A$ folds, and $X^t_{A1}, X_{A2} \geq 0$ are assigned to sets $\B_1 \setminus \{b\}$ and $\B_2$, respectively.
    \item If player 1 pre-committed a force $t$ to a battlefield $b \in \B_1$, the players play two one-shot General Lotto games; the game $\gl(X^t_1,X^t_{A1},\B_1 \setminus \{b\})$, and $\gl(X_2,X^t_{A2},\B_2)$. Simultaneously, player $A$ wins battlefield $b$ if it called in the previous step, or loses $b$ if it folded. 
    
    Else, if player 1 did not pre-commit a force to a battlefield in $\B_1$, the players play two one-shot General Lotto games; $\gl(X_1,X_{A1},\B_1)$, and $\gl(X_2,X_{A2},\B_2)$.
\end{enumerate}

The decision made in each step becomes common knowledge among the three players in the subsequent steps. 
The payoff to player $A$ is the sum of the payoffs it receives in the one-shot General Lotto games it plays with each of the players $i \in \{1,2\}$, plus its losses or gains on the battlefield $b$ if player $1$ pre-commits.
Player $A$'s decision to call or fold on a pre-commitment always maximizes its payoff, as in the previous section.
Observe that if player 1 decides to use less than its full budget of $X_1$, or to pre-commit a force $t \in [0, X_1]$ to a given battlefield $b \in \B_1$, it does not require any dialogue with player 2.
The game structures ensuing from player 1's choices of not pre-committing any force, and of pre-committing a force $t$ to the battlefield $b \in \B_1$ are illustrated in \cref{fig:coalitional_blotto} (a) and (b), respectively.

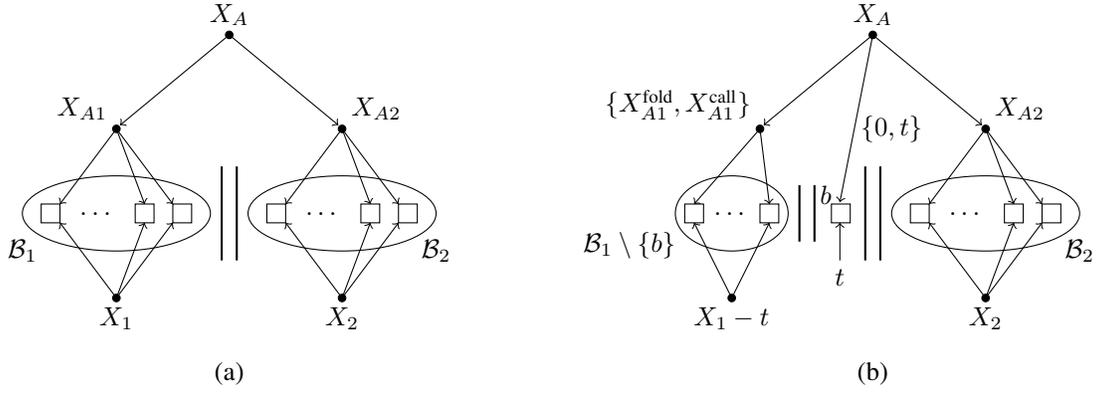
\begin{figure*}[t]
    \centering
    \begin{tikzpicture}
        \filldraw[color=black, fill=white] (1,0.125) ellipse (1.25 and 0.5);
        \node at (-0.25,-0.375) {$\B_1$};
        \filldraw[color=black, fill=white] (4,0.125) ellipse (1.25 and 0.5);
        \node at (5.25,-0.375) {$\B_2$};
        \draw (0,0) rectangle (0.25,0.25);
        \draw (0.75,0.125) node {$\dots$};
        \draw (1.25,0) rectangle (1.5,0.25);
        \draw (1.75,0) rectangle (2,0.25);
        \draw[thick] (2.4,-0.5) -- (2.4,0.75);
        \draw[thick] (2.6,-0.5) -- (2.6,0.75);
        \draw (3,0) rectangle (3.25,0.25);
        \draw (3.75,0.125) node {$\dots$};
        \draw (4.25,0) rectangle (4.5,0.25);
        \draw (4.75,0) rectangle (5,0.25);
        \draw[fill] (1,-1) circle [radius=0.05];
        \draw[fill] (4,-1) circle [radius=0.05];
        \draw[fill] (2.5,2.5) circle [radius=0.05];
        \draw[fill] (1,1.25) circle [radius=0.05];
        \draw[fill] (4,1.25) circle [radius=0.05];
        \draw[->] (2.5,2.5) -- (1.05,1.3);
        \draw[->] (2.5,2.5) -- (3.95,1.3);
        \draw[->] (1,-1) -- (0.25,0);
        \draw[->] (1,-1) -- (1.375,0);
        \draw[->] (1,-1) -- (1.75,0);
        \draw[->] (1,1.25) -- (0.25,0.25);
        \draw[->] (1,1.25) -- (1.375,0.25);
        \draw[->] (1,1.25) -- (1.75,0.25);
        \draw[->] (4,-1) -- (3.25,0);
        \draw[->] (4,-1) -- (4.375,0);
        \draw[->] (4,-1) -- (4.75,0);
        \draw[->] (4,1.25) -- (3.25,0.25);
        \draw[->] (4,1.25) -- (4.375,0.25);
        \draw[->] (4,1.25) -- (4.75,0.25);
        \node[above left] at (1,1.25) {$X_{A1}$};
        \node[above right] at (4,1.25) {$X_{A2}$};
        \node[below] at (1,-1) {$X_1$};
        \node[below] at (4,-1) {$X_2$};
        \node[above] at (2.5,2.5) {$X_A$};
        \node at (2.5,-2) {(a)};
    \end{tikzpicture}
    \qquad \qquad
    \begin{tikzpicture}
        \filldraw[color=black, fill=white] (0.625,0.125) ellipse (0.75 and 0.5);
        \filldraw[color=black, fill=white] (4,0.125) ellipse (1.25 and 0.5);
        \node at (5.25,-0.375) {$\B_2$};
        \draw (0,0) rectangle (0.25,0.25);
        \draw (0.625,0.125) node {$\dots$};
        \node[below left] at (0,0) {$\B_1 \setminus \{b\}$};
        \draw (1,0) rectangle (1.25,0.25);
        \draw[thick] (1.525,-0.25) -- (1.525,0.5);
        \draw[thick] (1.725,-0.25) -- (1.725,0.5);
        \draw (1.95,0) rectangle (2.2,0.25);
        \node[above left] at (2.075,0.125) {$b$};
        \draw[thick] (2.4,-0.5) -- (2.4,0.75);
        \draw[thick] (2.6,-0.5) -- (2.6,0.75);
        \draw (3,0) rectangle (3.25,0.25);
        \draw (3.75,0.125) node {$\dots$};
        \draw (4.25,0) rectangle (4.5,0.25);
        \draw (4.75,0) rectangle (5,0.25);
        \draw[fill] (0.625,-1) circle [radius=0.05];
        \draw[fill] (4,-1) circle [radius=0.05];
        \draw[fill] (2.5,2.5) circle [radius=0.05];
        \draw[fill] (1,1.25) circle [radius=0.05];
        \draw[fill] (4,1.25) circle [radius=0.05];
        \draw[->] (2.5,2.5) -- (1.05,1.3);
        \draw[->] (2.5,2.5) -- (3.95,1.3);
        \draw[->] (0.625,-1) -- (0.125,0);
        \draw[->] (0.625,-1) -- (1.125,0);
        \draw[->] (2.0625,-0.5) -- (2.0625,0);
        \node[below] at (2.0625,-0.5) {$t$};
        \draw[->] (1,1.25) -- (0.125,0.25);
        \draw[->] (1,1.25) -- (1.125,0.25);
        \draw[->] (2.5,2.5) -- (2.0625,0.275);
        \node at (2.75,1.25) {$\{0,t\}$};
        \draw[->] (4,-1) -- (3.25,0);
        \draw[->] (4,-1) -- (4.375,0);
        \draw[->] (4,-1) -- (4.75,0);
        \draw[->] (4,1.25) -- (3.25,0.25);
        \draw[->] (4,1.25) -- (4.375,0.25);
        \draw[->] (4,1.25) -- (4.75,0.25);
        \node[above left] at (1,1.25) {$\{X_{A1}^\text{fold},X_{A1}^\text{call}\}$};
        \node[above right] at (4,1.25) {$X_{A2}$};
        \node[below] at (0.625,-1) {$X_1-t$};
        \node[below] at (4,-1) {$X_2$};
        \node[above] at (2.5,2.5) {$X_A$};
        \node at (2.5,-2) {(b)};
    \end{tikzpicture}
    \caption{
    (a) The game structure resulting from player 1's choice to not pre-commit a force to any battlefield in $\B_1$ (Stage 1). In Stage 2, player $A$ splits its budget $X_A$ into the two forces $X_{A1}, X_{A2} > 0$ such that $X_{A1} + X_{A2} = X_A$. In the final stage, two General Lotto games are played, $\gl(X_1,X_{A1},\B_1)$, and $\gl(X_2,X_{A2},\B_2)$, between $A$ and 1, and $A$ and 2, respectively.
    (b) The game structure resulting from player 1's choice to pre-commit a force $t \in [0,X_1]$ to battlefield $b \in \B_1$ (Stage 1). 
    In Stage 2, player $A$ decides whether to match player 1's pre-commitment or fold (i.e. $x_{A,b} \in \{0,t\}$, and splits the remainder of its budget into two forces $X_{A1}, X_{A2} > 0$. If $A$ calls, then $X_{A1} + X_{A2} = 1-t$, otherwise, $X_{A1} + X_{A2} = 1$. 
    In the final stage, two General Lotto games are played, $\gl(X^t_1,X_{A1},\B_1 \setminus \{b\})$, and $\gl(X_2,X_{A2},\B_2)$, between $A$ and 1, and $A$ and 2, respectively. Player $A$ wins the battlefield $b \in \B_1$ if it calls, and loses $b$ if it folds.}
    \label{fig:coalitional_blotto}
\end{figure*}

From \cite{Kovenock_2012}, the optimal split of player $A$'s budget into $X_{A1}, X_{A2}$ in stage 2 is determined by distinct regions of the parameter space with $X_i, X_{-i} > 0$ for $i \in \{1,2\}$, $\phi_1$ and $\phi_2$, where $-i$ is the other player in $\{1,2\}$:
\begin{itemize}[leftmargin=*]
    \item {\bf Case 1:} $\frac{\phi_i}{\phi_{-i}} > \frac{ \max\{ (X_i)^2, 1 \} }{X_i X_{-i}}$. This corresponds to the region where player $A$ maximizes its payoff by committing all of its forces to the General Lotto game on the set of battlefields $\B_i$, i.e., $X_{Ai} = X_A$ and $X_{A(-i)} = 0$.
    \item {\bf Case 2:} $\frac{\phi_i}{\phi_{-i}} > \frac{X_i}{X_{-i}}$ and $0 < 1 - \sqrt{\frac{\phi_i X_i X_{-i}}{\phi_{-i}}} \leq X_{-i}$. This is the range of values for which the following split maximizes $A$'s payoff: 
    \begin{align*}
        X_{Ai} &= \sqrt{\frac{\phi_i X_i X_{-i}}{\phi_{-i}}} > X_i \\
        X_{A(-i)} &= 1-X_{Ai} < X_{-i}.
    \end{align*}
    \item {\bf Case 3:} $\frac{\phi_i}{\phi_{-i}} \geq \frac{X_i}{X_{-i}}$ and $1 - \sqrt{\frac{\phi_i X_i X_{-i}}{\phi_{-i}}} > X_{-i}$. In order to maximize its payoff, player $A$ splits its budget as 
    \[ X_{Ai} = \frac{\sqrt{X_i \phi_i}}{\sqrt{X_i \phi_i} + \sqrt{X_{-i} \phi_{(-i)}}}, \quad \forall i \in \{1,2\}. \]
    \item {\bf Case 4:} $\frac{\phi_i}{\phi_{-i}} = \frac{X_i}{X_{-i}}$ and $X_i + X_{-i} \geq 1$. Any split of player $A$'s budget results in the same payoff, thus $X_{Ai} \in [0,1]$, $X_{A(-i)} = 1 - X_{Ai}$.
\end{itemize}
Note that player $A$ splits its budget differently in the cases 1 and 2 depending on whether $i = 1$ or $i = 2$. Meanwhile, the split is independent of $i$ in cases 3 and 4. The regions are illustrated in \cref{fig:cases} for $\phi_1 = \phi_2$.
If player 1 pre-commits in stage 1 of the game, and player $A$ calls, we will be re-normalizing the players' budgets such that $X_A - t = 1$ to retain the cases for player $A$'s optimal budget divisions, i.e., for a pre-commitment $t \in [0,X_1]$, if player $A$ calls, $X_1 - t \to (X_1 - t)/(1-t)$, $X_2 \to X_2/(1-t)$, and $X_A - t \to 1$.

\begin{figure}[t]
    \centering
    \begin{tikzpicture}[scale=2.5]
        \draw[very thin, gray!30, step=0.25 cm](0,0) grid (2,2);
        \draw[<->] (0,2) node[above, left] {$X_2$} -- (0,0) -- (2,0) node[below, right] {$X_1$};
        \foreach \x in {0,0.5,...,1.5}
            \draw[xshift=\x cm, thick] (0pt,-0.5pt)--(0pt,0.5pt) node[below] {$\x$};
        \foreach \x in {0,0.5,...,1.5}
            \draw[yshift=\x cm, thick] (-0.5pt,0pt)--(0.5pt,0pt) node[left] {$\x$};
        \draw[dashed] (1,0) -- (1,1) -- (0,1);
        \draw[very thick,domain=0.5:2] plot (\x,\x);
        \draw[domain=0.5:1] plot (\x, {(1-\x)^2/\x});
        \draw[domain=1:2] plot (\x, {1/\x});
        \draw (1.625,1) node {Case 1 (i=2)};
        \draw (1.5,0.25) node {Case 2 (i=2)};
        \draw[domain=0.5:1] plot ({(1-\x)^2/\x}, \x);
        \draw[domain=1:2] plot ({1/\x}, \x);
        \draw (1.125,1.75) node {Case 1 (i=1)};
        \draw (0.375,1.25) node {Case 2 (i=1)};
        \draw (0.25,0.25) node {Case 3};
        \draw (1.75,1.5) node {\bf Case 4};
    \end{tikzpicture}
    \caption{The cases dividing the possible player budgets $X_1, X_2 > 0$, which we borrow from~\cite{Kovenock_2012}. This illustration corresponds to a coalitional General Lotto game in which $\phi_1 = \phi_2$.}
    \label{fig:cases}
\end{figure}
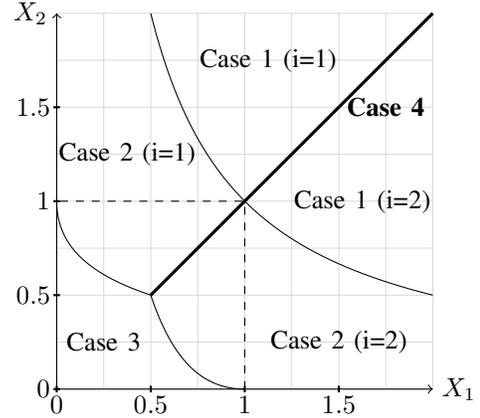

In line with our discussion in the previous sections, we wish to find the ranges of values $X_1$, $X_2$, $X_A$, $\phi_1$ and $\phi_2$ in which player 1 can improve its payoff by making a pre-commitment.
We first consider player $1$'s choice of using less than its full budget, which we show cannot increase its payoff in the theorem below.
\begin{thm}
    Consider the setting with player budgets $X_1, X_2, X_A > 0$ and battlefield values such that $\phi_i = \sum_{b \in \B_i} v_b > 0$ for $i \in \{1,2\}$.
    Then, $\pi_1(X_1-t, X^t_{A1}) \leq \pi_1(X_1, X_{A1})$, for all $X_1, X_2, t > 0$, where $X^t_{A1}$ is the adversary's allocation to $\B_1$ in its best response to player 1's transfer $t$, and $\pi_1(\cdot,\cdot)$ is given by the first entry of \eqref{eq:twoplayer_payoff}.
\end{thm}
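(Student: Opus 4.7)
The plan is to recast the claim as the monotonicity of a single-variable function. With $X_2, \phi_1, \phi_2 > 0$ fixed, define $p(Y) := \pi_1(Y, X_{A1}^*(Y))$, where $X_{A1}^*(Y) \in [0,1]$ denotes the adversary's best-response allocation to $\B_1$ (as characterized by Cases 1--4 above) when player 1 deploys budget $Y$ on $\B_1$. Writing $Y = X_1 - t$ and $X_{A1}^t = X_{A1}^*(X_1-t)$, the claim reduces to showing that $p(Y) \leq p(X_1)$ for all $Y \in (0, X_1]$, or equivalently that $p$ is non-decreasing on $(0, X_1]$.

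I would establish this by a direct case analysis on Cases 1--4 of the adversary's best response. In each case, $X_{A1}^*(Y)$ admits a closed-form expression in $Y$, so substituting into the equilibrium payoff formula~\eqref{eq:twoplayer_payoff} yields an explicit expression for $p(Y)$ whose derivative can be signed directly. In Case 1 the allocation $X_{A1}^*$ is constant in $Y$ (equal to either $1$ or $0$) and the payoff formula is itself monotone in $Y$, so $p$ is trivially non-decreasing. In Cases 2 and 3, substitution gives expressions of the form $p(Y) = c_1 Y + c_2 \sqrt{Y} + c_3$ (when player 1 is the weaker party on $\B_1$) or $p(Y) = c_3 - c_1/Y + c_2/\sqrt{Y}$ (when player 1 is the stronger party), and $p'(Y) \geq 0$ follows after invoking the inequalities defining the case. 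Case 4 is the indifference regime for the adversary, and its payoff value agrees with those of the adjacent cases on the shared boundary, so $p$ can be assigned by continuity and remains non-decreasing. Finally, one checks that transitions between cases are themselves continuous: on any boundary between two cases the two formulas for $X_{A1}^*(Y)$ coincide, hence so do the two expressions for $p(Y)$. Gluing the within-case monotonicity across these continuous transitions yields $p$ non-decreasing on $(0, X_1]$, which completes the proof.

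The main obstacle is signing the derivative in Case 2 with $i=2$, where player 1 is the stronger party on $\B_1$ and so $p(Y) = \phi_1 - \phi_1/Y + \sqrt{\phi_1\phi_2 X_2/Y}$. Direct computation gives $p'(Y) = \phi_1/Y^2 - \tfrac{1}{2}\sqrt{\phi_1\phi_2 X_2}\,Y^{-3/2}$, whose sign is not apparent from the formula alone. The resolution is to invoke the feasibility constraint $X_{A1}^*(Y) = 1 - \sqrt{\phi_2 Y X_2/\phi_1} \geq 0$, which forces $Y \leq \phi_1/(\phi_2 X_2) < 4\phi_1/(\phi_2 X_2)$, and this bound is precisely what is needed to conclude $p'(Y) > 0$ throughout the region where Case 2 ($i=2$) applies. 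Analogous but easier invocations of each case's defining inequalities handle the remaining sub-cases, ensuring that no parameter choice makes $p$ locally decreasing.
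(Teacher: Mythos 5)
Your proposal is correct and follows essentially the same route as the paper, which itself only sketches the argument: show the derivative of $\pi_1(X_1-t, X^t_{A1})$ with respect to $t$ is non-positive, verified separately in each of the adversary's best-response cases. Your handling of the one nontrivial subcase (Case 2, $i=2$, via the feasibility bound $Y \leq \phi_1/(\phi_2 X_2) < 4\phi_1/(\phi_2 X_2)$) and your attention to continuity of $p$ across case boundaries supply details that the paper omits ``for brevity.''
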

\begin{proof}
    We omit the details of the proof for brevity. 
    However, the approach is to show that the derivative with respect to $t > 0$ of the function $\pi_1(X_1-t, X^t_{A1})$ is non-positive for all $X_1, X_2 > 0$.
    This can be verified independently for Cases 1 (i=1), 1 (i=2), 2 (i=1), 2 (i=2), 3, and 4.
\end{proof}

We have shown that player 1 cannot increase its payoff by pre-committing to using less than its full budget. 
Next, we analyze whether player 1 can increase its payoff by pre-committing a portion of its budget to a battlefield $b \in \B_1$, i.e., for some $t \in [0,X_1]$,
\[ \pi^t_1(X_1,X^t_{A1}) \geq \pi_1(X_1,X_{A1}), \]
where the functions $\pi^t_1$ and $\pi_1$ are defined over the set of battlefields $\B_1$ as in \eqref{eq:improvement}.
We denote by $X^t_{A1}$ the player $A$'s optimal allocation to the set of battlefields $\B_1$ when player 1 pre-commits. 
Formally, when player 1 pre-commits a force $t$ to $b \in \B_1$,
\[
X^t_{A1} = \begin{cases}
    X^\text{call}_{A1}(t) + t & \quad \text{if player $A$ calls}, \\
    X^\text{fold}_{A1}(t) & \quad \text{if player $A$ folds},
\end{cases}
\]
where $X^\text{call}_{A1}(t)$ and $X^\text{fold}_{A1}(t)$ are player $A$'s optimal allocations to $\B_1 \setminus \{b\}$ when calling and folding, respectively.
The value $X_{A1}$ is $A$'s optimal allocation to $\B_1$ when player 1 does not pre-commit.
In the following theorem, we characterize a region in the parameter space $0 < X_1, X_2 < 1$, $\phi_1, \phi_2 > 0$, and $0 < v_b < \phi_1$ within which a particular choice of pre-commitment increases player 1's payoff.

\begin{thm} \label{thm:three-player_precommitment}
    Consider a setting with player budgets $0 < X_1, X_2 < 1$ and $X_A = 1$, and two sets of battlefields $\B_1, \B_2$ such that $\phi_j = \sum_{b \in \B_j} v_b$ for $j \in \{1,2\}$.
    For a battlefield $b \in \B_1$ with value $v_b > 0$, the pre-commitment 
    \[ t^* = X_1 - \frac{\phi_1 - v_b}{\phi_2} X_2, \] 
    increases player 1's payoff if the budgets $X_1, X_2 > 0$ satisfy $t^* > 0$, and one of the following two sets of conditions:
    
    \vspace*{6pt}\noindent 1) player budgets $X_1, X_2$ are in Case 1 (i=1) for $\phi_1, \phi_2$, $X_1-t^*, X_2$ and $(X_1-t^*)/(1-t^*), X_2/(1-t^*)$ are in Case 2 (i=2) for $\phi_1-v_b, \phi_2$, and
    \begin{align}
        \phi_2(X_1{+}X_2{-}1) & \geq (\phi_1 {-} v_b) X_2(X_1{-}1) + v_b X_1X_2 \label{eq:3player1i} \\
        2 v_b &\geq \left[ X_1 {-} \frac{\phi_1 {-} v_b}{\phi_2}X_2 \right] \frac{\phi_2}{X_2}. \label{eq:3player1ii}
    \end{align}
    
    \vspace*{6pt}\noindent 2) player budgets $X_1, X_2$ are in Case 2 (i=1) for $\phi_1, \phi_2$, $X_1-t^*, X_2$ and $(X_1-t^*)/(1-t^*), X_2/(1-t^*)$ are in Case 2 (i=2) for $\phi_1-v_b, \phi_2$, and
    \begin{align}
        \phi_2(X_1{+}X_2{-}1) & \geq \sqrt{\phi_1\phi_2X_1X_2} - (\phi_1{-}v_b)X_2, \label{eq:3player2i}\\
        2 v_b &\geq \left[ X_1 {-} \frac{\phi_1{-}v_b}{\phi_2}X_2 \right] \frac{\phi_2}{X_2}. \label{eq:3player2ii}
    \end{align}
\end{thm}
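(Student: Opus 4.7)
The plan is to proceed by direct computation in each of the two conditions, exploiting the defining property of $t^*$: it is chosen so that $(X_1-t^*)/X_2 = (\phi_1-v_b)/\phi_2$, which places the post-pre-commitment sub-game on $\B_1\setminus\{b\}$ versus $\B_2$ exactly at the Case~4 boundary for totals $(\phi_1-v_b,\phi_2)$, pinning down player $A$'s optimal split.

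First I would write down the reference payoff $\pi_1(X_1,X_{A1})$ in each condition. Under condition~1, Case~1 ($i{=}1$) gives $X_{A1}=1$ and $\pi_1 = \phi_1(X_1-1)$. Under condition~2, Case~2 ($i{=}1$) gives $X_{A1}=\sqrt{\phi_1 X_1 X_2/\phi_2}$ and $\pi_1 = \sqrt{\phi_1\phi_2 X_1/X_2} - \phi_1$, both via \eqref{eq:twoplayer_payoff}. Next I would evaluate $\pi^{t^*}_1$: the hypotheses place both the fold configuration $(X_1-t^*,X_2)$ and the renormalized call configuration $\bigl((X_1-t^*)/(1-t^*),\,X_2/(1-t^*)\bigr)$ in Case~2 ($i{=}2$) for $(\phi_1-v_b,\phi_2)$, so $A$'s optimal allocation to $\B_1\setminus\{b\}$ can be read off directly from the Case~2 formula, with the identity $(X_1-t^*)/X_2 = (\phi_1-v_b)/\phi_2$ collapsing the resulting radicals. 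Substituting into \eqref{eq:twoplayer_payoff} on $\B_1\setminus\{b\}$ and adjusting by $-v_b$ (if $A$ calls) or $+v_b$ (if $A$ folds) yields two candidate values of $\pi^{t^*}_1$; player $A$'s rational choice selects the smaller one for player~1, and this call-vs-fold comparison is precisely what \eqref{eq:3player1ii} and \eqref{eq:3player2ii} encode.

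Finally, I would substitute the computed $\pi^{t^*}_1$ and $\pi_1$ into the target inequality $\pi^{t^*}_1 \geq \pi_1$ and simplify; the manipulation should reduce to \eqref{eq:3player1i} under condition~1 and to \eqref{eq:3player2i} under condition~2. The main obstacle is the algebraic bookkeeping: two top-level conditions, each with a call/fold subcase and a renormalization by $1/(1-t^*)$, produce lengthy expressions that must simplify cleanly to the target inequalities. A secondary subtlety is that $t^*$ places the sub-game at the Case~2/Case~4 boundary, requiring a brief continuity check that the Case~2 formula agrees with the arbitrary-split Case~4 payoff there, justifying use of Case~2 throughout.
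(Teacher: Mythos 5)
Your overall plan matches the paper's proof: in each of the two regions you compute $\pi_1$ from the no-pre-commitment split (Case~1 ($i{=}1$) giving $\phi_1(X_1-1)$, Case~2 ($i{=}1$) giving the square-root formula), compute $\pi^{t^*}_1$ from the Case~2 ($i{=}2$) split of the sub-game on $\B_1\setminus\{b\}$ using the identity $(X_1-t^*)/X_2=(\phi_1-v_b)/\phi_2$, derive \eqref{eq:3player1i}/\eqref{eq:3player2i} from $\pi^{t^*}_1\geq\pi_1$ under the assumption that $A$ calls, and derive \eqref{eq:3player1ii}/\eqref{eq:3player2ii} from $A$'s call-versus-fold comparison. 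Your observation that $t^*$ places the sub-game exactly on the Case~2/Case~4 boundary, so that a limiting or continuity argument is needed to justify the Case~2 formula, is also how the paper handles it (it evaluates at a pre-commitment ``just below $t^*$'').

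The one step that would fail as written is your description of how $A$ decides between calling and folding: you say $A$'s rational choice ``selects the smaller one for player~1.'' In the three-player game the interaction between $A$ and player~1 alone is not zero-sum from $A$'s perspective. $A$'s payoff is the sum of its payoffs on \emph{both} fronts plus $\pm v_b$, and the call/fold decision changes the budget available against player~2 (calling consumes $t^*$ and triggers the renormalization by $1/(1-t^*)$, which alters $X_{A2}$ as well as $X_{A1}$). The correct comparison is therefore $u^{\text{call}}_A \geq u^{\text{fold}}_A$ computed with both the $(\phi_1-v_b)$ terms and the $\phi_2$ terms, as in the paper's appendix; it is this comparison, not the comparison of player~1's two candidate payoffs, that reduces to \eqref{eq:3player1ii} and \eqref{eq:3player2ii}. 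Comparing player~1's call and fold payoffs instead would yield a different, incorrect threshold. With that one comparison corrected, the rest of your computation proceeds exactly as in the paper.
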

The proof is provided in Appendix B.

In \cref{fig:phi3_05}, we plot the region in which the player budgets $0 < X_1, X_2 < 1$ satisfy the conditions in \cref{thm:three-player_precommitment} for two parameterizations of the game, i.e., the region in which player 1 can increase its payoff by pre-committing the force $t^* > 0$ to battlefield $b \in \B_1$. 
In contrast to the two-player setting, we observe that there is a nonempty set of parameters for which player 1 can make pre-commitments to increase its payoff in this three-player game setup.

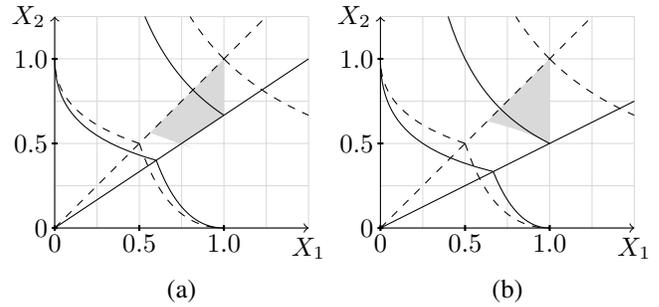
\begin{figure}[t]
    \centering
    \hspace{-25pt}
    \begin{minipage}[b]{0.2\textwidth}
    \begin{tikzpicture}[scale=2.25]
        \draw[very thin, gray!30, step=0.25 cm](0,0) grid (1.5,1.25);
        \draw[<->] (0,1.25) node[above, left] {$X_2$} -- (0,0) -- (1.5,0) node[below] {$X_1$};
        \foreach \x in {0,0.5,1.0}
            \draw[xshift=\x cm, thick] (0pt,-0.5pt)--(0pt,0.5pt) node[below] {$\x$};
        \foreach \x in {0,0.5,1.0}
            \draw[yshift=\x cm, thick] (-0.5pt,0pt)--(0.5pt,0pt) node[left] {$\x$};
        \fill[gray!30,domain=0.8165:1,variable=\x] (1,1) -- (0.8165,0.8165)
            -- plot (\x,{2/3/\x}) -- (1,2/3) -- cycle;
        \fill[gray!30,domain=0.563:0.75,variable=\x] (0.563,0.563)
            -- plot ({1/(1+2*\x-1.22474*\x^(1/2))},{\x/(1+2*\x-1.22474*\x^(1/2))})
            -- (0.75,0.5) -- (1.0,2/3) -- (0.8165,0.8165) -- cycle;
        \draw[dashed,domain=0:1.25] plot (\x,\x);
        \draw[dashed,domain=0.5:1] plot (\x, {(1-\x)^2/\x});
        \draw[dashed,domain=1:1.5] plot (\x, {1/\x});
        \draw[dashed,domain=0.5:1] plot ({(1-\x)^2/\x}, \x);
        \draw[dashed,domain=1:1.25] plot ({1/\x}, \x);
        \draw[domain=0:1.5] plot (\x,{2*\x/3});
        \draw[domain=0.533:1] plot (\x,{2/3/\x});
        \draw[domain=0.6:1] plot (\x,{-1.5*(-\x*\x+2*\x-1)/\x});
        \draw[domain=0.4:1] plot ({2*(\x-1)*(\x-1)/\x/3},\x);
        \node[below] at (0.75,-0.25) {(a)};
    \end{tikzpicture}
    \end{minipage}
    \hspace{15pt}
    \begin{minipage}[b]{0.2\textwidth}
    \begin{tikzpicture}[scale=2.25]
        \draw[very thin, gray!30, step=0.25 cm](0,0) grid (1.5,1.25);
        \draw[<->] (0,1.25) node[above, left] {$X_2$} -- (0,0) -- (1.5,0) node[below] {$X_1$};
        \foreach \x in {0,0.5,1.0}
            \draw[xshift=\x cm, thick] (0pt,-0.5pt)--(0pt,0.5pt) node[below] {$\x$};
        \foreach \x in {0,0.5,1.0}
            \draw[yshift=\x cm, thick] (-0.5pt,0pt)--(0.5pt,0pt) node[left] {$\x$};
        \fill[gray!30,domain=0.707:1,variable=\x] (1,1) -- (0.707,0.707)
            -- plot (\x,{1/(2*\x)}) -- (1,0.5) -- cycle;
        \fill[gray!30,domain=0.6306:1,variable=\x] (0.6306,0.6306)
            -- plot (\x,{(-\x+(\x*(4-3*\x))^(1/2)+2)/4}) -- (1,0.5) -- (0.75,0.75) -- cycle;
        \draw[dashed,domain=0:1.25] plot (\x,\x);
        \draw[dashed,domain=0.5:1] plot (\x, {(1-\x)^2/\x});
        \draw[dashed,domain=1:1.5] plot (\x, {1/\x});
        \draw[dashed,domain=0.5:1] plot ({(1-\x)^2/\x}, \x);
        \draw[dashed,domain=1:1.25] plot ({1/\x}, \x);
        \draw[domain=0:1.5] plot (\x,{\x/2});
        \draw[domain=0.4:1] plot (\x,{1/(2*\x)});
        \draw[domain=0.667:1] plot (\x,{2*(\x*\x-2*\x+1)/\x});
        \draw[domain=0.333:1] plot ({(\x*\x-2*\x+1)/(2*\x)},\x);
        \node[below] at (0.75,-0.25) {(b)};
    \end{tikzpicture}
    \end{minipage}
    \caption{The region in which  player 1 can increase its payoff by pre-committing $t^* > 0$ of its force to the battlefield $b \in \B_1$, as in \cref{thm:three-player_precommitment}. 
    Plot a) corresponds to parameters $\phi_1 - v_b = \phi_2 = 1$, and $v_b = 1/2$.
    Plot b) corresponds to parameters $\phi_1 - v_b = \phi_2 = 1$, and $v_b = 1$.
    The solid lines divide the $X_1,X_2$-plane into the various cases for the game without pre-commitments, i.e., the General Lotto games are over battlefields $\B_1$ and $\B_2$. The dashed lines correspond to the game in which player 1 pre-commits a force $t^* > 0$ to battlefield $b \in \B$, i.e, the General Lotto games are over battlefields in $\B_1  \setminus \{b\}$ and $\B_2$.}
    \label{fig:phi3_05}
\end{figure}

\section{Conclusion and future work}
In this paper, we investigated whether announcing a strategic intention could ever benefit a player. We modelled a variant of the Colonel Blotto game in which one player can pre-commit a portion of its budget. 

In the one-vs-one setting, we proved that there is never an incentive for a weaker player to pre-commit a portion of its budget to any given battlefield.
In the three-player setting, wherein the adversary faces two players at once, we demonstrated that situations \textit{do} exist in which one of the players facing the adversary can improve its payoff by pre-committing some of its forces to a battlefield. In effect, such announcements shift the adversary's allocation of resources towards the other player.  

Future directions include an investigation of emergent equilibrium outcomes when multiple players facing a common adversary have the option to announce a pre-commitment. 
Additionally, exploring how a pre-commitment made by player $1$ can affect the payoffs experienced by the adversary and player $2$ will provide insight into the consequences of self-interested play.

\bibliographystyle{IEEEtran}
\bibliography{sources}

\appendix
\subsection{The Two-Player Setting}

\begin{proof}[Proof of Theorem \ref{theorem:twoplayer}]
    Suppose player $B$ pre-commits to sending $t \in (0,X_B]$ of its budget to a battlefield $b\in\B$.
    Define $\gamma:=\frac{X_B}{X_A} < 1$ as the budget ratio. The approach is to show that for any set of parameters $(\gamma,v_b) \in (0,1) \times (0,\phi)$, the payoff difference is negative (i.e., $\Delta\pi_B^t < 0$). 
    This amounts to identifying the regions where it is optimal for player A to call or fold, and evaluating the quantity $\Delta\pi_B^t$. 
    
    Let $\tilde{\phi} := \phi - v_b$ be the total valuation of prizes without $b$. 
    Using the payoffs for calling and folding (\eqref{eq:call_payoff} and \eqref{eq:fold_payoff}), the adversary calls on the pre-commitment if  
    \begin{equation}
        C(t) := \frac{\tilde{\phi}}{X_A^2} t^2 - \frac{t}{X_A}(2 v_b + \tilde{\phi}\gamma) + 2v_b  > 0.
    \end{equation}
    It will fold if $C(t) < 0$, and is indifferent between calling and folding if $C(t) = 0$. 
    We have $C(0) = 2v_b > 0$, $C(X_B) = 2v_b(1 - \gamma) > 0$, and $C'(t_m) = 0$ where $t_m := \frac{v_b}{\tilde{\phi}}X_A + \frac{1}{2}X_B$. 
    The roots of $C(t)$ are
    \begin{equation}
            t_{\pm} := t_m \pm \frac{X_A}{\tilde{\phi}}\sqrt{Q(v_b) }.
    \end{equation}
    where $Q(v_b) := \frac{\tilde{\phi}^2}{X_A^2} t_m^2 - 2v_b\tilde{\phi}$. Consequently, we will have the following cases.
    \begin{enumerate}
        \item  Suppose either $t_m < X_B$ and $Q(v_b) < 0$, or $t_m > X_B$. Then $C(t) > 0$ for all $t\in[0,X_B]$, and the adversary calls on all pre-commitments.
        \item Suppose $t_m < X_B$ and $Q(v_b) > 0$. Then the adversary calls if $t \in [0,t_-) \cup (t_+,X_B]$ and folds if $t \in [t_-,t_+]$
    \end{enumerate}
    The regions described by these cases is depicted in Figure \ref{fig:thm1_proof}. 
    To prove the result, we need to show  $\Delta\pi_B^t< 0$ for any set of parameters. Suppose the adversary calls on the precommitment under any case. Player $B$'s payoff difference is
    \begin{equation}
        \begin{aligned}
            \Delta\pi_B^t &= \tilde{\phi}\left( \frac{X_B - t}{X_A-t} - 1 \right) - v_b - \phi\left( \frac{X_1}{X_A} - 1\right) \\
            &= \tilde{\phi}\frac{X_B - t}{X_A-t} - \phi \frac{X_B}{X_A} \\
            &< 0, \quad \forall t \in [0,X_B].
        \end{aligned}
    \end{equation}
    Consequently, player $B$ never has an incentive to pre-commit when the adversary calls. It remains to be shown that $B$ also never has an incentive to pre-commit in cases where the adversary folds. We thus focus on case 2) when $t \in [t_-,t_+]$. In this regime, the following condition is satisfied:
    \begin{equation}
        \begin{aligned}
            Q(v_b) &= v_b^2(3-\gamma + \frac{\gamma^2}{4}) + v_b\phi(-2+\gamma-\frac{\gamma^2}{2}) + \frac{\gamma^2}{4}\phi^2 \\
            &> 0.
        \end{aligned}
    \end{equation}
    We have $Q(0) = \frac{1}{4}\gamma^2\phi > 0$ and $Q(\phi) = \phi^2(1 - \frac{1}{4}\gamma^2) + \frac{\gamma^2}{4}\phi > 0$. The roots of $Q(v_b)$ are calculated to be
    \begin{equation}
        v_\pm(\gamma) := \frac{\phi}{2} \frac{(2-\gamma+\frac{1}{2}\gamma^2) \pm 2\sqrt{1-\gamma}}{3-\gamma+\frac{1}{4}\gamma^2}
    \end{equation}
    The minimizer of $Q(v_b)$ lies in $(0,\phi)$. Additionally, the roots $v_\pm$ are real and also lie in the interval $(0,\phi)$. We then have $Q(v_b) > 0$ for $v_b \in [0,v_-) \cup (v_+,\phi]$. 
    
    We show case 2) further restricts attention to $v_b \in [0,v_-)$. Note the condition $t_m < X_B$ is equivalent to  $v_b < H(\gamma) := \phi \frac{\gamma/2}{1 + \gamma/2}$. This function monotonically increases from zero at $\gamma=0$ to $\frac{\phi}{3}$ at $\gamma=1$. We therefore have $v_+(\gamma) > H(\gamma)$, since  $v_+(\gamma)$ monotonically decreases from $\frac{2\phi}{3}$ at $\gamma=0$ to $\frac{\phi}{3}$ at $\gamma = 1$. Now, we show $v_-(\gamma) < H(\gamma)$. It is true that $H(\gamma) > \frac{\phi}{3}\gamma$, as well as
    \begin{equation}
        \begin{aligned}
            v_-(\gamma) &= \phi\frac{1-\frac{\gamma}{2} + \frac{\gamma^2}{4} - \sqrt{1-\gamma}}{3-\gamma+\frac{\gamma^2}{4}} \\
            &< \phi\frac{\frac{\gamma}{2} + \frac{\gamma^2}{4}}{3-\gamma+\frac{\gamma^2}{4}} = \phi\gamma \frac{\frac{1}{2} + \frac{\gamma}{4}}{3-\gamma(1-\frac{\gamma}{4})} \\
            &< \frac{\phi}{3}\gamma
        \end{aligned}
    \end{equation}
    The first inequality is due to $\sqrt{1-\gamma} > 1-\gamma$, and the second inequality follows because the numerator (denominator) of the second line is increasing (decreasing) in $\gamma \in (0,1)$. Therefore, $v_-(\gamma) < H(\gamma)$.

  When the adversary folds, player $B$ wins battlefield $v_b$ and its payoff difference is
    \begin{equation}\label{eq:diff_fold}
        \begin{aligned}
            \Delta\pi_B^t &= \tilde{\phi}\left( \frac{X_B - t}{X_A} - 1 \right) + v_b - \phi\left( \frac{X_B}{X_A} - 1\right) \\
            &= v_b(2 - \frac{X_B}{X_A}) - \frac{\phi - v_b}{X_A}t, \quad t \in [t_-,t_+]. 
        \end{aligned}
    \end{equation}
    Since \eqref{eq:diff_fold} is decreasing in $t$, it suffices to show it is negative for $t = t_-$ and for all $v_b \in (0,v_-)$: 
    \begin{equation}
        \begin{aligned}
            \Delta\pi_B^{t_-} &= v_b(2 - \gamma) - \left(v_b + \frac{\phi-v_b}{2}\gamma - \sqrt{Q(v_b)} \right)
        \end{aligned}
    \end{equation}
  This function is decreasing with respect to $v_b$:
  \begin{equation}\label{eq:decreasing}
      \frac{\partial\Delta\pi_B^{t_-}}{\partial v_b} = 1-\frac{\gamma}{2} + \frac{Q'(v_b)}{2\sqrt{Q(v_b)}} < 0.
  \end{equation}
  To verify, we know that $Q(v_b) > 0$ and $Q'(v_b) < 0$ for all $v_b \in (0,v_-)$. Observe that $(1-\frac{\gamma}{2}) + \frac{Q'(0)}{2\sqrt{Q(0)}}  = 2 - \frac{2}{\gamma} - \gamma < 0$.    Additionally, $\frac{\partial}{\partial v_b}\left[ \frac{Q'(v_b)}{\sqrt{Q(v_b)}} \right] < 0$ if and only if $\frac{1}{2}(Q'(v_b))^2 - Q''(v_b)Q(v_b) > 0$. A calculation shows this condition is equivalent to $\gamma < 1$. Thus, \eqref{eq:decreasing} holds. Because $\Delta\pi_B^{t_-} = 0$ at $v_b = 0$, we have $\Delta\pi_B^{t_-} < 0$ for all $v_b \in (0,v_-)$. This concludes the proof.
\end{proof}
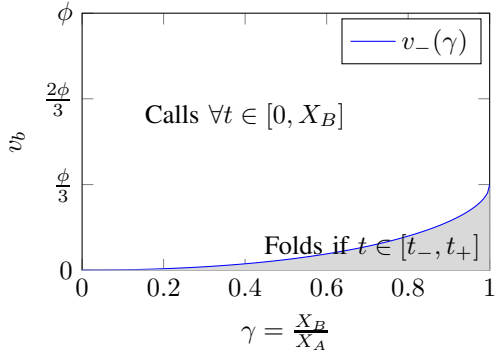
\begin{figure}
    \centering
    \begin{tikzpicture}
        \begin{axis}[%
        xlabel={$\gamma = \frac{X_B}{X_A}$},
        ylabel={$v_b$},
        xmin=0, xmax=1,
        ymin=0, ymax=1,
        ytick={0,0.333,0.667,1},
        yticklabels={$0$,$\frac{\phi}{3}$,$\frac{2\phi}{3}$,$\phi$},
        height=5cm,
        ]
        \addplot+ [name path=A, no marks, domain=0:1, samples=201] {1/2*((2-x+x*x/2)-2*(1-x)^(1/2))/(3-x+x*x/4)};
        \addlegendentry{$v_{-}(\gamma)$}
        \addplot+ [name path=B, no marks, gray!30, domain=0:1] {0};
        \addplot [gray!30] fill between [of=A and B];
        \node at (0.4,0.6) {Calls $\forall t \in [0, X_B]$};
        \node[above left] at (1,0) {Folds if $t \in [t_{-}, t_{+}]$};
        \end{axis}
    \end{tikzpicture}
    \caption{Player A's optimal decisions over the entire parameter space $(\gamma,v_b) \in (0,1) \times (0,\phi)$ in the two player setting. The region above the curve $v_-(\gamma)$ corresponds to case 1 in the proof of Theorem \ref{theorem:twoplayer}. The region under the curve  corresponds to case 2. }
    \label{fig:thm1_proof}
\end{figure}
As Figure \ref{fig:thm1_proof} depicts, player $A$ folds only if player $B$ pre-commits a sufficient (but not too high) amount of force $t$ to a battlefield $b$ that has a relatively low value. Even so, player $B$ does not obtain a higher payoff by forcing A to fold because the value of battlefield $b$ does not compensate for the high pre-commitment required.

\subsection{The Three-Player Setting}

\begin{proof}[Proof of Theorem \ref{thm:three-player_precommitment}]
    The proof amounts to showing that in the regions specified, player 1 receives higher payoff if it pre-commits to the transfer $t^* > 0$ than if it were to play the original game, under the assumption that player $A$ calls. Due to this assumption, we must also show that player $A$ prefers to call on the pre-commitment rather than fold.
    Recall that when player $A$ calls, we re-normalize the player budgets such that $X_1 - t \to \frac{X_1-t}{1-t}$, $X_2 \to \frac{X_2}{1-t}$, and $X_A - t \to 1$.
    Note that, if player 1 pre-commits a force just below $t^* > 0$, the pairs $(X_1-t^*, X_2)$ and $(\frac{X_1-t^*}{1-t^*}, \frac{X_2}{1-t^*})$ are in Case 2 ($i=2$) of the resulting game on battlefields $\B_1 \setminus \{b\}$ and $\B_2$.
    We split the remainder of the proof into the two different regions specified in the claim.
    
    \vspace*{6pt}\noindent{\it Part 1):} When the values $X_1, X_2$ are in Case 1 (i=1), and player 1 does not pre-commit, the payoff to player 1 is
    \[ \pi_1(X_1, X_{A1}) = \phi_1 \left[ X_1 - 1 \right]. \]
    When player 1 pre-commits a force of $t > 0$ to battlefield $b \in \B_1$ and player $A$ calls, its payoff is
    \[ \pi^t_1(X_1, X^\text{call}_{A1}(t)) = (\phi_1-v_b) \left[ 1 - \frac{1-\sqrt{\frac{\phi_2\frac{X_1-t}{1-t}\frac{X_2}{1-t}}{\phi_1-v_b}}}{\frac{X_1-t}{1-t}} \right] - v_b. \]
    Player 1 prefers to pre-commit a force of $t > 0$ if $\pi^t_1(X_1,X^\text{call}_{A1}(t)) \geq \pi_1(X_1,X_{A1})$.
    For $t = t^*$, this condition simplifies to \eqref{eq:3player1i}, i.e.,
    \[ \phi_2(X_1+X_2-1) \geq (\phi_1-v_b) X_2(X_1-1) + v_b X_1X_2. \]
    
    When player 1 pre-commits $t > 0$, the payoff to player $A$ for folding is
    \begin{align*}
        u^{\text{fold}}_A(X^\text{call}_{A1}(t),X_1,t) =& (\phi_1{-}v_b) \!\left[ \frac{X^\text{fold}_{A1}(t)}{X_1{-}t} {-} 1 \right] \\
        &\hspace{25pt}{+} \phi_2 \! \left[ \frac{1{-}X^\text{fold}_{A1}(t)}{X_2} {-} 1 \right] \!{-} v_b. 
    \end{align*}
    If player $A$ calls, its payoff is
    \begin{align*}
        u^{\text{call}}_A(X^\text{call}_{A1}(t),X_1,t) = &(\phi_1-v_b) \!\left[ \frac{X^\text{call}_{A1}(t)}{\frac{X_1-t}{1-t}} - 1 \right] \\
        &\hspace{25pt} + \phi_2 \left[ \frac{1{-}X^\text{call}_{A1}(t)}{\frac{X_2}{1{-}t}} {-} 1 \right] \!{+} v_b.
    \end{align*}
    For given player budgets $X_1, X_2 > 0$ and pre-commitment $t \in [0,X_1]$, player $A$ prefers to call when 
    \[ u^{\text{call}}_A(X^\text{call}_{A1}(t),X_1,t) \geq u^{\text{fold}}_A(X^\text{fold}_{A1}(t),X_1,t). \]
    For $t = t^*$, this condition simplifies to \eqref{eq:3player1ii},
    \[ 2 v_b \geq \left[ X_1 - \frac{\phi_1-v_b}{\phi_2}X_2 \right] \frac{\phi_2}{X_2}. \]
    
    \vspace*{6pt}\noindent{\it Part 2):} When the values $X_1, X_2$ are in Case 2 (i=1), and player 1 does not pre-commit a force, the payoff to player 1 is
    \[ \pi_1(X_1, X_{A1}) = \phi_1 \left[ \frac{X_1}{\sqrt{\frac{\phi_1X_1X_2}{\phi_2}}} - 1 \right]. \]
    When player 1 pre-commits a force of $t > 0$ to battlefield $b \in \B_1$ and player $A$ calls, its payoff is
    \[ \pi^t_1(X_1, X^\text{call}_{A1}(t)) = (\phi_1-v_b) \left[ 1 - \frac{1-\sqrt{\frac{\phi_2\frac{X_1-t}{1-t}\frac{X_2}{1-t}}{\phi_1-v_b}}}{\frac{X_1-t}{1-t}} \right] - v_b. \]
    Player 1 prefers to pre-commit a force of $t > 0$ if $\pi^t_1(X_1,X^\text{call}_{A1}(t)) \geq \pi_1(X_1,X_{A1})$.
    For $t = t^*$, this condition simplifies to \eqref{eq:3player2i},
    \[ \phi_2(X_1+X_2-1) \geq \sqrt{\phi_1\phi_2X_1X_2} - (\phi_1-v_b)X_2 . \]
    
    Player $A$'s payoffs for calling and folding are the same as in the previous part, so \eqref{eq:3player2ii} must be satisfied.
\end{proof}

\end{document}